\tikzset{
    vertex/.style={circle,draw,fill=black, inner sep= 1.5pt},
} 
\tikzstyle{vert2}=[circle,inner sep=1.5,fill=white,draw,minimum size=.2cm]
\tikzstyle{square}=[regular polygon,regular polygon sides=4]
\tikzstyle{edge}=[thick]
\title{Restless Temporal Path Parameterized Above Lower Bounds} %
\author{Philipp Zschoche}{Algorithmics and Computational Complexity, Faculty IV, Technische Universität Berlin, Germany}{zschoche@tu-berlin.de}{https://orcid.org/0000-0001-9846-0600}{}
\authorrunning{Philipp Zschoche} %
\keywords{temporal graphs, fixed-parameter tractability, above-lower-bound parameterization} %
\newcommand{\probSRestlessPath}{\textsc{Short Restless Temporal Path}}
\newcommand{\FPT}{\textsc{FPT}}
\newcommand{\NP}{\textsc{NP}}
\newcommand{\XP}{\textsc{XP}}
\newcommand{\Wone}{\textsc{W[1]}}
\newcommand{\PequalNP}{\textsc{P}$=$\textsc{NP}}
\newcommand{\yes}{\emph{yes}}
\newcommand{\no}{\emph{no}}
\newcommand{\TG}{\mathcal G}
\newcommand{\TE}{\mathcal E}
\newcommand{\lifetime}{\tau}
\newcommand{\TGcompact}{\TG := (V,(E_i)_{i=1}^\lifetime)}
\newcommand{\NN}{\mathbb N}
\newcommand{\wait}{\delta}
\newcommand{\z}{z}
\newcommand{\etal}{et~al.}
\DeclarePairedDelimiterX{\abs}[1]{\lvert}{\rvert}{#1}
\newcommand{\set}[1]{[#1]}
\newcommand{\defeq}{:=}%
\newcommand{\mvert}{\;\middle\vert\;}
\newcommand{\decprob}[3]{
	\begin{center}
		\begin{minipage}{0.97\linewidth}%
			\noindent
			\textsc{#1}
			\begin{description}
			\item[\textcolor{black}{\textbf{Input:}}]  #2
			\item[\textcolor{black}{\textbf{Question:}}]  #3
			\end{description}
		\end{minipage}
	\end{center}
}
\begin{document}

\maketitle

\newcommand{\runningtimeInL}{O(4^{\ell} \cdot \ell^2 \abs{\TG}^3\wait\log(\nicefrac{k}{p\ell}))}
\begin{abstract}
Reachability questions are one of the most fundamental algorithmic primitives in temporal 
graphs---graphs whose edge set changes over discrete time steps.
A core problem here is  the \NP-hard \probSRestlessPath{}:
given a temporal graph $\TG$, two distinct vertices $s$ and $\z$, and
two numbers $\wait$ and $k$,
is there a $\wait$-restless temporal $s$-$\z$~path of length at most~$k$?
A temporal path is 
a path whose edges appear 
in chronological order and a temporal path is $\wait$-restless if two consecutive path edges appear at most~$\wait$ time steps apart from each other.
Among others, this problem has applications in neuroscience and epidemiology.
While \probSRestlessPath{} is known to be computationally hard, 
e.g.,~it is \NP-hard for only three time steps and \Wone-hard when parameterized by the feedback vertex number of the underlying graph,
it is fixed-parameter tractable when parameterized by the path length~$k$.
We improve on this by showing that \probSRestlessPath{} can be solved in (randomized) $4^{k-d}\abs{\TG}^{O(1)}$~time,
where~$d$ is the minimum length of a temporal $s$-$\z$~path.
\end{abstract}

\section{Introduction}
Susceptible-Infected-Recovered. 
These are the three states of the \emph{SIR-model}---a 
canonical spreading model for diseases where recovery confers lasting resistance~\cite{Bar16,kermack1927contribution,New18}.
Here, an individual is at first susceptible (S) to get a certain disease, can devolve to be infected (I), 
and ends up resilient 
after recovery (R).
We study one of the most fundamental algorithmic questions in this model:
given a set of individuals with a list of physical contacts over time, and two individuals~$s$ and~$\z$, 
is it possible to have a chain of infections from~$s$ to~$\z$?
As the timing of the physical contacts is crucial in this scenario,
we use a \emph{temporal graph} $\TGcompact$ consisting of a set~$V$ of vertices
and an edge set that changes over discrete time steps described by 
a chronologically ordered sequence~$(E_i)_{i=1}^\lifetime$ of edge sets over~$V$.
A temporal path is a path whose edges appear in chronological order.
In particular, 
a sequence $P := ((e_i,t_i))_{i=1}^m$ of time-edges from  $\TE(\TG) := \bigcup_{i=1}^\lifetime E_i \times \{i\}$
is a \emph{temporal $s$-$\z$~path} of length~$m$
if $(\bigcup_{i=1}^m e_i,\{ e_i \mid i \in \set{m} \})$ is an $s$-$\z$~path (no vertex is visited twice)
and $t_i \leq t_{i+1}$ for all~$i \in \set{m-1}$.
If we construct a temporal graph where the vertices are individuals and an edge $e \in E_t$ represents a physical contact of two individuals
at time step~$t$, then a chain of infections is represented by a temporal path.
However, not every temporal path yields a potential chain of infections, as an infected person might recover before the next individual is met. 
To represent infection chains in the SIR-model by temporal paths,
we restrict the waiting time at each intermediate vertex to a prescribed duration---that is, 
the time until an individual becomes resilient after infection.
These temporal paths are called restless.
In particular, the temporal $s$-$\z$~path $P$ is \emph{$\wait$-restless} 
if~$t_i \leq t_{i+1} \leq t_i + \wait$ for all~$i \in \set{m-1}$.
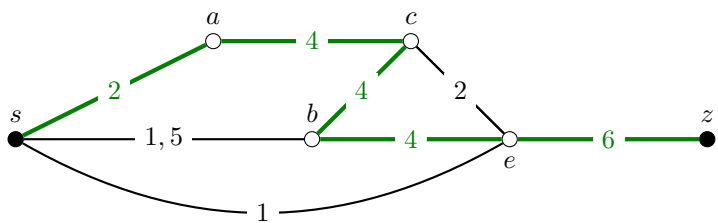
\begin{figure}[t]
  \centering
\begin{tikzpicture}[scale=1.3,]
		\node[vert2,fill=black,label=$s$] (s) at (0,0) {};
		\node[vert2,label=$a$] (a) at (2,1) {};
		\node[vert2,label={below:$e$}] (b) at (5,0) {};
		\node[vert2,label=$c$] (c) at (4,1) {};
		\node[vert2,label=$b$] (d) at (3,0) {};
		\node[vert2,fill=black,label=$\z$] (z) at (7,0) {};

		\draw[edge,green!50!black, ultra thick] (s) -- node[fill=white] {$2$} (a);
		\draw[edge] (s) to [bend right]  node[fill=white] {$1$}  (b);
		\draw[edge,green!50!black, ultra thick] (a) -- node[fill=white] {$4$} (c);
		\draw[edge] (b) -- node[fill=white] {$2$} (c);
		\draw[edge] (b) -- node[fill=white] {$2$} (c);
		\draw[edge,green!50!black, ultra thick] (c) -- node[fill=white] {$4$} (d);
		\draw[edge,green!50!black, ultra thick] (d) to node[fill=white] {$4$} (b);
		\draw[edge,green!50!black, ultra thick] (b) to node[fill=white] {$6$} (z);
		\draw[edge] (s) to node[fill=white] {$1,5$} (d);

\end{tikzpicture}
\caption{An illustration of a temporal graph with vertices $s,a,b,c,d,e$, and~$\z$.
		The labels on the edges denote at which time steps the edges are present.
		The time-edges of a~$2$-restless temporal $s$-$\z$~path in this temporal graph are
		marked by thick (green) edges. 
		In fact, this is the only~$2$-restless temporal~$s$-$\z$~path in this temporal graph,
		as we cannot visit a vertex twice and two consecutive time-edges have to be at most two time steps apart.
}
\label{fig:simple-example}
\end{figure}
Hence, restless temporal paths model infection transmission routes of diseases that grant immunity upon recovery~\cite{Hol16}.
Other applications of restless temporal paths 
appear in the context of delay-tolerant networking with time-aware routing tables \cite{CasteigtsHMZ20},
and in the context of
finding signaling pathways in brain networks \cite{thejaswi2020restless}.
Consider \cref{fig:simple-example} for an illustration of a temporal graph 
with a $2$-restless temporal $s$-$\z$~path.

The central problem of this work is as follows. %
\decprob{\probSRestlessPath}
{A temporal graph $\TG$, a source vertex $s \in V$, a destination vertex $\z \in V$, and integers $\wait,k \in \mathbb N$.}
{Is there a $\wait$-restless temporal $s$-$\z$~path in~$\TG$ of length at most~$k$?}

Casteigts~\etal~\cite{CasteigtsHMZ20} showed that \probSRestlessPath{} is \NP-hard even 
if~$\wait=1$, 
		$\lifetime=3$, 
		every edge appears only once, and
		the underlying graph has a maximum degree of six.
		Moreover, they showed that it is \Wone-hard when parameterized by the distance to disjoint paths of the underlying 
		graph\footnote{That is, the minimum number of vertices we need to remove from a graph such that the remaining graph consists of a set of vertex-disjoint paths.}.

		Hence, \probSRestlessPath{} is presumably not fixed-parameter tractable when parameterized
		by a wide range of well-known parameters of the underlying (static) graph, e.g., 
		feedback vertex number, pathwidth, or cliquewith.
		However, \probSRestlessPath{} is fixed-parameter tractable when parameterized by~$k$
		or the treedepth of the underlying graph or the feedback edge number of the underlying graph~\cite{CasteigtsHMZ20}.
		 Thejaswi~\etal~\cite{thejaswi2020restless} showed 
		that for every $p \in \mathbb R$ with $0<p<1$ 
		there is a randomized~$O(2^kk \abs{\TG}\wait\log(k\cdot \nicefrac{1}{p}))$-time 
		algorithm for \probSRestlessPath{} that has a one-sided
		error probability of at most~$p$.
		More precisely, if the algorithm returns \yes{}, then the given instance $I$ of \probSRestlessPath{} is a \yes-instance,
		and if the algorithm returns \no{}, then the probability that $I$ is a \yes-instance is at most~$p$.
		They conducted experiments on large synthetic and real-world data sets and showed that their algorithm performs 
		well as long as the parameter~$k$ is small. 
		For example, one can solve \probSRestlessPath{} with $k \leq 9$
		and a temporal graph with $36$~million time-edges in less than one hour with customary desktop hardware.
		On the data set used in the experiments, $k$ seems to be the only useful parameter for which we know 
		that \probSRestlessPath{} is fixed-parameter tractable;
		all other known parameters 
		(i.e., timed feedback vertex number~\cite{CasteigtsHMZ20}, 
		treedepth of the underlying graph, and feedback edge number of the underlying graph) 
		are too large to be eligible in practice~\cite{thejaswi2020restless}.
		Hence, the current algorithms are not satisfactory when it comes to computing long restless temporal paths 
		in real-world temporal networks.

		The parameter~$k$ of \probSRestlessPath{} can be seen as the \emph{solution size} and is thus a natural and
		well-motivated parameter from a parameterized algorithmics point of view.
		However, as we observed before, \FPT-algorithms regarding the solution size are not necessarily practical,
		e.g., if all solutions are large.
		To address this problem, one can investigate 
		\emph{parameterizations above guaranteed lower bounds} 
		\cite{%
DBLP:journals/algorithmica/AlonGKSY11,%
DBLP:journals/siamdm/BezakovaCDF19,%
DBLP:journals/jcss/CrowstonFGJKRRTY14,%
DBLP:journals/mst/GutinKLM11,%
DBLP:journals/jcss/GutinIMY12,%
DBLP:journals/jal/MahajanR99,%
DBLP:journals/jcss/MahajanRS09}:
that is, the \emph{difference} between the smallest size of a solution and a guaranteed lower bound for the solution size.
In the case of \probSRestlessPath{}, three lower-bounds for $k$ seem
particularly interesting:
\begin{description}

		\item[The distance from $s$ to $\z$:] 
				The minimum length of an $s$-$\z$~path in the underlying graph. 
		\item[The temporal distance from $s$ to $\z$:]
				The minimum length of a temporal $s$-$\z$~path.
		\item[The $\wait$-restless temporal distance from $s$ to $\z$:]
				The minimum length of a $\wait$-restless temporal $s$-$\z$~walk.
				Herein,
a sequence $W := ((e_i,t_i))_{i=1}^m$ of time-edges 
is a \emph{temporal $s$-$\z$~walk} of length~$m$
if the edges $(e_i)_{i=1}^m$ induce an $s$-$\z$~walk
and $t_i \leq t_{i+1}$ for all~$i \in \set{m-1}$.
Moreover, $W$ is $\wait$-restless if $m=1$ or 
$t_{i+1} - t_i \leq \wait$.
\end{description}

Note that the length of a $\wait$-restless temporal $s$-$\z$~path 
is at least 
the minimum length of a $\wait$-restless temporal $s$-$\z$~walk 
which is in turn at least 
the minimum length of a temporal $s$-$\z$~path which is again at least
the minimum length of an $s$-$\z$~path in the underlying graph.
For the sake of brevity, we say for an 
instance~$(\TG,s,\z,\wait,k)$ of \probSRestlessPath{}
that 
the $\wait$-restless temporal distance from $s$ to $\z$,
the temporal distance from $s$ to $\z$, or
the distance from $s$ to $\z$ is $k+1$ 
if there is no 
$\wait$-restless temporal $s$-$\z$~walk,
no temporal $s$-$\z$~path, or 
no $s$-$\z$~path in the underlying graph, respectively.

Unfortunately, 
a closer look at the \NP-hardness reductions of Casteigts~\etal~\cite{CasteigtsHMZ20} reveals that, unless~\PequalNP{},
there is not even a $\abs{\TG}^{f(k-d_r)}$-time algorithm for \probSRestlessPath{},
where $d_r$ is the $\wait$-restless temporal distance from $s$ to $\z$ and $f$ is a computable function. 

\subparagraph{Our contributions.}
We show that \probSRestlessPath{} can be solved in randomized $4^{k-d}\abs{\TG}^{O(1)}$~time,
where $d$ is the temporal distance from~$s$ to~$\z$.
To the best of our knowledge, this is the first above-lower-bound \FPT-algorithm on temporal graphs.
More precisely, we show that for every~$p \in \mathbb R$ with~$0 < p < 1$ 
there is a randomized~$\runningtimeInL$-time algorithm for \probSRestlessPath{}
with a one-sided error probability of at most~$p$, 
where $\ell:=k-d$ and~$d$ is the temporal distance from~$s$ to~$z$.
The main technical contribution behind this is a geometrical perspective onto temporal graphs which seems applicable to other temporal graph problems when parameterized above the temporal distance between vertices.
In the resulting algorithm, 
the only subroutine with a super-polynomial running time
is the algorithm of Thejaswi~\etal~\cite{thejaswi2020restless} 
that we employ to find $\wait$-restless temporal path of length at most $2(k-d)+1$.
In fact, 
this subroutine can be replaced by a deterministic algorithm of  Casteigts~\etal~\cite{CasteigtsHMZ20}---this leads 
to a $2^{O(k-d)}\abs{\TG}^3\wait$-time deterministic algorithm 
for \probSRestlessPath{}.
The running time overhead induced by our technique is $O(\abs{\TG}^2\ell)$ in the deterministic case 
and $O(\abs{\TG}^2\ell\log(\nicefrac{k}{\ell p}))$ if we use the algorithm of Thejaswi~\etal~\cite{thejaswi2020restless},
where $\ell := k - d$ and $d$ is the temporal distance from~$s$ to~$z$.
The overhead with the randomized algorithm is larger as we need that 
the error probability of several calls of 
the randomized algorithm accumulate to~$p$.
Although the running time overhead of our technique is is slightly larger with the randomized algorithm of Thejaswi~\etal~\cite{thejaswi2020restless} because a faster overall running time.

\subparagraph{Further related work.}
In the literature, waiting time constraints are studied from various angles.
Himmel~\etal~\cite{himmel_efficient_2019}
studied a variant of restless temporal paths where multiple
visits of vertices are permitted, i.e., restless temporal
\emph{walks}. 
In contrast to restless temporal paths, they showed that such walks can be 
computed in polynomial time. %
Pan and Saramäki~\cite{PS11} empirically studied the correlation between 
waiting times of temporal paths and the ratio of
the network reached in spreading processes.
Akrida~\etal~\cite{Akr+19a} studied flows in temporal networks with ``vertex buffers'', 
which however pertains to the quantity of
information that a vertex can store, rather than a duration.

Algorithmic reachability questions are one of the most thriving research topics in temporal graphs.
Bui{-}Xuan~\etal~\cite{XuanFJ03} and Wu~\etal~\cite{wu_efficient_2016} studied the computation of temporal paths that
satisfy certain optimality criteria and show that shortest, fastest, and foremost temporal path can be computed in 
polynomial time. 
In the temporal setting, reachability is not an
equivalence relation among vertices 
and the reachability relation between vertices is not even transitive---this makes many problems computationally harder than 
their counterpart on static graphs. 
Michail~and~Spirakis~\cite{michail2016traveling} studied the \NP-hard question of whether a temporal graph contains a 
temporal walk that visits each vertex at least once.
This problem remains computationally hard 
even if the underlying graph is a star \cite{AMS19,BM21}.
If the underlying graph is connected at each time step
and the walk can only contain one edge in each time step,
then a fast exploration is guaranteed \cite{Erlebach0K15,ErlebachS18,ErlebachKLSS19}.
However, on these so-called always-connected temporal graphs, the decision problem remains \NP-hard, even if the underlying graph has pathwidth two \cite{BodlaenderZ19}.
Kempe~\etal~\cite{kempe2002connectivity} studied
whether there are $k$ vertex-disjoint temporal paths between two given vertices. 
While the classical analogue of this on static graphs is polynomial-time solvable, it becomes \NP-hard in the temporal setting.
Moreover, this problem remains \NP-hard on a single underlying path, when we are looking for a set of temporal paths which is only 
pairwise vertex-disjoint at any point in time \cite{KMMNZ21}.
Furthermore, the related problem of finding small separators in temporal graphs becomes computationally hard \cite{FluschnikMNRZ20,kempe2002connectivity}, even on quite restricted temporal graph classes~\cite{FluschnikMNRZ20}.
Bhadra and Ferreira~\cite{BF03} showed that finding a maximum temporally connected component is \NP-hard. 
Furthermore, a temporal graph may not have a 
sparse spanner~\cite{AxiotisF16},
and computing a spanner 
with a minimum number of time-edges is \NP-hard~\cite{Akr+17,mertzios2019temporal}.

Related to spreading processes, 
Enright~\etal~\cite{EnrightMMZ19,enright2021assigning}, 
Deligkas and Potapov~\cite{DeligkasP20},
and Molter~\etal~\cite{MRZ21} 
studied restricting the set of 
reachable vertices via various 
temporal graph modifications---all described decision problems 
are \NP-hard in rather restricted settings.

\section{Preliminaries} 
We denote by~$\NN$ and~$\NN_0$ the natural numbers excluding and including zero, respectively.
By $\mathbb R$,~$\mathbb Q$, and~$\mathbb Z$ we denote the real numbers, rational numbers, and the integers, respectively.
Moreover, 
$[a,b] \defeq \{i \in \mathbb Z \mid a\leq i \leq b\}$, $[n]\defeq[1,n]$, 
$\mathbb R_+ \defeq \{ x \in \mathbb R \mid x \geq 0 \}$,
and~$\mathbb Q_+ \defeq \{ x \in \mathbb Q \mid x \geq 0 \}$.
We denote by $\log(x)$ the ceiling of the binary logarithm of $x$ ($\lceil \log_2(x)\rceil$), 
where~$x \in \mathbb R$.

Let $(a_i)_{i=1}^n \defeq (a_1,a_2,\dots,a_n)$ be a sequence of length $n$ 
and let $(b_i)_{i=1}^m$ be a sequence of length $m$.
We denote by~$x \in (a_i)_{i=1}^n$ that there is an~$i \in \set{n}$ such that~$x=a_i$.
We denote by~$(a_i)_{i=1}^n \subseteq (b_i)_{i=1}^m$ that~$(a_i)_{i=1}^n$ is a \emph{subseqence} of~$(b_i)_{i=1}^m$.
That is, there is an injective function $\sigma \colon \set{n} \to \set{m}$ 
such that $a_i = b_{\sigma(i)}$ for all $i \in \set{n}$ 
and $\sigma(i) < \sigma(j)$ for all $i,j \in \set{n}$ with $i<j$.
Moreover, for a set~$S$,
we denote by $(a_i)_{i=1}^n \setminus S$ the subsequence of $(a_i)_{i=1}^n$ 
where an element~$a_i$ is removed if and only if~$a_i \in S$, for all~$i \in \set{n}$.
Appending an element~$x$ to sequence~$(a_i)_{i=1}^n$ results in the sequence~$(a_i)_{i=1}^{n+1}$, where~$a_{n+1} = x$.

A \emph{randomized (Monte-Carlo) algorithm} has additionally access to an 
oracle that, given some number~$n \in \NN$, draws a value~$x \in \set{n}$ uniformly at random in constant time.
A (randomized) algorithm with error probability~$p$ 
is a randomized algorithm that returns the correct answer with probability~$1-p$.
For a finite alphabet~$\Sigma$ and a language $L \subseteq \Sigma^*$,
a (randomized) algorithm for~$L$ with a one-sided error probability~$p$
is a randomized algorithm that returns for every input $x \in \Sigma^*$ either \yes{} or \no{},
and one of the following is true:
\begin{itemize}
		\item If \yes{} is returned, then~$x \in L$ with probability~$1$.
				If \no{} is returned, then~$x \in L$ with probability~$p$.
		\item If \yes{} is returned, then~$x \not\in L$ with probability~$p$.
				If \no{} is returned, then~$x \not\in L$ with probability~$1$.
\end{itemize}
We refer to Mitzenmacher and Upfal~\cite{DBLP:books/daglib/0012859} for more material on randomized algorithms.
If it is not stated otherwise, then we use standard notation from graph theory \cite{Die16}.
Graphs are simple and undirected by default.

\subparagraph{Temporal graphs.}
A temporal graph $\TGcompact$ consists of a set of vertices $V(\TG) \defeq V$ and a sequence of edge sets $(E_i)_{i=1}^\lifetime$.
The number $\lifetime$ is the \emph{lifetime} of $\TG$.
The elements of $\TE(\TG) \defeq \bigcup_{i \in [\lifetime]} E_i \times \{i\}$ are called the 
\emph{time-edges} of $\TG$.
We say that time-edge $(e,t) \in  \TE(\TG)$ has time stamp $t$ and is in time step $t$.
The graph $(V,E_i)$ is called \emph{layer $i$} of temporal graph $\TG$, for all $i \in \set{\lifetime}$.
The \emph{underlying graph} of~$\TG$ is the (static) graph $(V,\bigcup_{i=1}^\lifetime E_i)$.
For every $v\in V$ and every $t\in [\lifetime]$, 
we denote the \emph{appearance of vertex}~$v$ \emph{at time}~$t$ 
by the pair $(v,t)$. 
For a time-edge $(\{v,w\},t)$ we call the vertex appearances $(v,t)$ and $(w,t)$ its \emph{endpoints}.
We assume that the \emph{size} 
of $\TG$ is~$\abs{\TG}\defeq\abs{V}+\sum_{i=1}^\lifetime \max\{1,\abs{E_i}\}$, 
that is, we do not assume to have 
compact representations of temporal graphs. 
For a vertex set $X \subseteq V$ of a temporal graph $\TGcompact$, 
we denote by $\TG[X]$ the temporal graph $(X,(E'_i)_{i=1}^\lifetime)$, 
where~$E'_i \defeq \left\{ e \in E_i \mvert e \subseteq X \right\}$.
Moreover, we denote the temporal graph $\TG$ 
without the vertices $X$ by $\TG - X \defeq \TG[V \setminus X]$.
For a time-edge set $Y$, 
we denote by $\TG \setminus Y$ the temporal graph 
where~$V(\TG\setminus Y) \defeq V(\TG)$ 
and~$\TE(\TG\setminus Y) \defeq \TE(\TG) \setminus Y$.

The set of vertices of the temporal path~$P = (e_i = (\{v_{i-1},v_i\},t_i))_{i=1}^m$ is denoted by~$V(P)=\left\{v_i \mvert i \in [m] \cup \{0\}\right\}$.
We say that~$P$ \emph{visits} the vertex~$v_i$ at time~$t$ if~$t \in [t_i,t_{i+1}]$, where~$i \in [m-1]$.
The \emph{departure} (or \emph{starting}) \emph{time}
of~$P$ is~$t_1$ and the \emph{arrival time} of~$P$ is~$t_m$.
A ($\wait$-restless) temporal $s$-$\z$~path of length~$m$ in a temporal graph~$\TG$ is a \emph{shortest} ($\wait$-restless) temporal $s$-$\z$~path 
if each temporal $s$-$\z$~path in~$\TG$ is of length at least~$m$.

A \emph{solution} of an instance~$(\TG,s,z,\wait,k)$ of \probSRestlessPath{} is a $\wait$-restless temporal $s$-$\z$ path of length at most $k$ in~$\TG$.

\subparagraph{Parameterized complexity.}
Let~$\Sigma$ be a finite alphabet.
A \emph{parameterized problem}~$L$ is a subset~$L\subseteq \Sigma^*\times \NN_0$.
The size of an instance $(x,k)\in\Sigma^*\times \NN_0$ is denoted by~$\abs{x}$ and usually we have that $\abs{x}+k \in O(\abs{x})$.
An instance~$(x,k)\in\Sigma^*\times \NN_0$ is a \yes-instance of~$L$ if and only if~$(x,k)\in L$ (otherwise it is a \no-instance).
A parameterized problem~$L$ is \emph{fixed-parameter tractable} (in \FPT) 
if
there is an (\FPT-)algorithm that decides
for every input~$(x,k)\in\Sigma^*\times \NN_0$ in~$f(k)\cdot |x|^{O(1)}$~time 
whether~$(x,k)\in L$, 
where~$f$ is some computable function only depending on~$k$.  
By slightly abusing the \FPT-terminology, we sometimes say that a parameterized problem is fixed-parameter tractable
even if the \FPT-algorithm has a constant one-sided error probability.
A parameterized problem~$L$ is in \XP{}
if for every input~$(x,k)$ one can decide in~$|x|^{f(k)}$~time whether~$(x,k)\in L$, 
where~$f$ is some computable function only depending on~$k$.  

The parameterized analogous of \NP{} and \NP-hardness are the \textsc{W}-hierarchy 
\begin{align*}
		\ \ \ \ \ \ \ \ \ \ \ \
		\ \ \ \ \ \ \ \ \ \ \ \
		\text{		\FPT{} $\subseteq$ \Wone{} $\subseteq$ \textsc{W[2]} $\subseteq \dots \subseteq$ \textsc{W[P]} $\subseteq$ \XP{} }
\end{align*}
and \textsc{W[}t\textsc{]}-hardness, 
where $t \in \NN \cup \{ $P$ \}$ and all inclusions are conjectured to be strict.
If some \textsc{W[}t\textsc{]}-hard parameterized problem is in \FPT, 
then \textsc{FPT}$=$\textsc{W[}t\textsc{]}.
We refer to 
Flum and Grohe~\cite{DBLP:series/txtcs/FlumG06},
Downey and Fellows~\cite{DF13}, and
Cygan~\etal~\cite{cygan2015parameterized}
for more material on parameterized complexity.

\section{The Algorithm} 
\label{sec:restless:above-lb}
In this section, we show that \probSRestlessPath{} can be solved in~$4^{k-d}\cdot \abs{\TG}^{O(1)}$~time with a constant one-sided error probability, where~$d$ is the minimum length of a temporal $s$-$\z$~path.
More precisely, we show the following.
\newcommand{\mainthm}{For every~$p \in \mathbb R$ with~$0 < p < 1$,
		there is a randomized $\runningtimeInL$-time algorithm for \probSRestlessPath{},
		where~$\ell := k-d$ and~$d$ is the minimum length of a temporal~$s$-$\z$~path.
		If this algorithm returns \yes, then the given instance is a \yes-instance.
		If this algorithm returns \no, 
then with probability of at least~$1-p$ the given instance is a \no-instance.}%
\begin{theorem}
		\label{thm:restless-abl}
		\mainthm
\end{theorem}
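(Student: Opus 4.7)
The plan is to exploit the polynomial-time computability of the temporal distance~$d$ from~$s$ to~$\z$ and reduce the search for a $\wait$-restless temporal $s$-$\z$~path of length at most~$k = d + \ell$ to polynomially many invocations of the randomized algorithm of Thejaswi~\etal~\cite{thejaswi2020restless} applied to auxiliary instances whose target path length is at most~$2\ell + 1$. The one-sided error probabilities of these invocations are controlled by a union bound. As a first step I would use the polynomial-time shortest-temporal-path algorithms of Bui-Xuan~\etal~\cite{XuanFJ03} and Wu~\etal~\cite{wu_efficient_2016} to compute~$d$ together with, for every vertex~$v \in V$, the temporal distance~$d_s(v)$ from~$s$ to~$v$ and the temporal distance~$d_\z(v)$ from~$v$ to~$\z$. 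These values instantiate the geometric perspective advertised in the introduction: each vertex carries coordinates~$(d_s(v), d_\z(v))$; any vertex with~$d_s(v) + d_\z(v) > k$ is irrelevant to any solution; and the slack~$d_s(v) + d_\z(v) - d$ quantifies how far~$v$ deviates from a shortest-temporal-path backbone.

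The central structural lemma I would prove asserts that every $\wait$-restless temporal $s$-$\z$~path~$P$ of length at most~$k$ admits a decomposition around a single distinguished \emph{detour} sub-path of length at most~$2\ell + 1$, flanked by sub-walks that trace shortest temporal paths and can be recovered directly from the coordinates. Intuitively, every non-backbone time-edge of~$P$ costs at least one unit of slack, and a detour leaving and rejoining the backbone pays this cost on both its outgoing and returning portions, yielding the factor of two in~$2\ell + 1$. The coordinates are used both to select canonical pivot vertex appearances bounding the detour and to verify that, outside the detour, $P$ can safely be replaced by precomputed backbone segments without breaking~$\wait$-restlessness at the seams.

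Armed with the lemma, the algorithm enumerates $O(\abs{\TG}^2 \ell)$ pivot configurations (essentially an ordered pair of vertex appearances together with a length parameter), constructs for each one a sub-instance of \probSRestlessPath{} with target length~$2\ell + 1$, and invokes the algorithm of Thejaswi~\etal~\cite{thejaswi2020restless} with error parameter~$p' = \Theta(p / (\abs{\TG}^2 \ell))$. Each invocation costs $O(4^\ell \cdot \ell \cdot \abs{\TG} \cdot \wait \cdot \log(\ell / p'))$ time; summing over all configurations and applying a union bound yields total error probability at most~$p$ and a running time of~$\runningtimeInL$ as required. The answer \yes{} is returned as soon as some invocation produces a short restless path that, composed with the identified backbone segments, gives a $\wait$-restless $s$-$\z$~path of length at most~$k$; otherwise \no{} is returned.

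The principal obstacle will be the structural lemma itself. Showing that the deviations of an arbitrary solution from some shortest-temporal-path backbone concentrate into a single contiguous detour of length at most~$2\ell + 1$ is nontrivial, because a solution may in principle weave in and out of the backbone multiple times, and the~$\wait$-restlessness constraint interacts subtly with the chosen coordinates. I would most likely need a careful case analysis showing that either a canonical pivot pair exists or the path can be locally rerouted without increasing its length. A secondary technical difficulty is engineering the sub-instance so that the short restless path returned by Thejaswi~\etal{} automatically respects the~$\wait$-gap constraints where it meets the precomputed backbone segments.
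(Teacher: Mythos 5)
There is a genuine gap: your central structural lemma is false as stated, and the repair you sketch does not work. A $\wait$-restless $s$-$\z$~path of length $k=d+\ell$ need not concentrate its deviation into one contiguous detour of length at most $2\ell+1$: it can make up to $\ell$ separate unit-slack excursions spread out over a path of length $d+\ell$ with $d\gg\ell$, so no single window of length $2\ell+1$ captures all non-backbone behaviour. Worse, the flanking portions can never be ``replaced by precomputed backbone segments'': a shortest temporal path (the objects computed by Bui-Xuan~\etal{} and Wu~\etal{}) is in general not $\wait$-restless at all --- the whole difficulty of the problem is that the only restless solutions may wait-constrainedly wiggle \emph{everywhere}, including on the portions where the distance to~$\z$ decreases by one per step. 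So the issue you call a ``secondary technical difficulty'' (restlessness at the seams) is in fact fatal, and on top of it you would need vertex-disjointness between the detour and the spliced backbone segments, which nothing in your setup guarantees. Your static coordinates $(d_s(v),d_\z(v))$ also lose the monotonicity-in-time that is needed to argue disjointness of pieces.

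The paper instead uses the time-dependent distance $d(v,t)$ (length of a shortest temporal $v$-$\z$~path departing at time at least $t$), defines \emph{distance separators} along the solution, and proves that every solution passes through a separator at least every $2\ell+1$ steps. This decomposes the solution into \emph{many} restless segments, each of length at most $2\ell+1$ and each lying in a restricted temporal subgraph $\Area{v,t}{u,t'}$; every such segment (not just one detour) is searched for with the algorithm of Thejaswi~\etal{}, and a dynamic program over vertex appearances stitches them together. Vertex-disjointness of consecutive segments follows because $d(\cdot,t)$ is non-decreasing in $t$, so no vertex can reappear on both sides of a separator. If you want to salvage your plan, you must abandon the single-detour picture and the polynomial-time backbone splicing and instead cover the entire solution by short restless segments in this way; note also that your union bound over all $O(\abs{\TG}^2\ell)$ invocations with $p'=\Theta(p/(\abs{\TG}^2\ell))$ introduces a $\log\abs{\TG}$ factor and does not give the claimed $\log(\nicefrac{k}{p\ell})$ term --- the paper avoids this by observing that, for one-sided error, only the $O(\nicefrac{k}{\ell})$ calls along a fixed solution need to succeed.
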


The proof of \cref{thm:restless-abl} is deferred to the end of this section.
In a nutshell, we use a prudent dynamic programming approach 
where we only check for $\wait$-restless temporal paths whose length is upper-bounded by $2(k-d)+1$ 
and then puzzle them together to ultimately find a $\wait$-restless temporal $s$-$z$~path,
where~$d$ is the minimum length of a temporal $s$-$\z$~path.
To detect $\wait$-restless temporal paths of some given length,
we employ the algorithm of Thejaswi~\etal~\cite{thejaswi2020restless}.
\begin{proposition}[\cite{thejaswi2020restless}]
		\label{lem:exact-restless-path}
For every $p \in \mathbb R$ with $0<p<1$ 
		there is a randomized~$O(2^k \cdot k \abs{\TG}\wait\log(k\cdot \nicefrac{1}{p}))$-time 
		algorithm that takes as input 
		a temporal graph~$\TG$,
		two vertices~$s,\z$, and 
		two integers~$\wait,k$.
		If the algorithm returns \yes,
		then there is a~$\wait$-restless temporal~$s$-$\z$~path
		of length exactly~$k$ in~$\TG$.
		If the algorithm returns \no,
		then with probability at least~$1-p$ there is no~$\wait$-restless temporal $s$-$\z$~path
		of length exactly~$k$ in~$\TG$.
\end{proposition}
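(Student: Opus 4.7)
The plan is to prove Proposition~\ref{lem:exact-restless-path} via algebraic fingerprinting and multilinear monomial detection in characteristic~$2$, following the framework of Koutis and of Björklund, Husfeldt, Kaski, and Koivisto, adapted to the restless temporal setting by pushing the waiting-time constraint into a layered dynamic program.

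First, I would encode walks as monomials. Assign a formal indeterminate $x_v$ to each vertex $v \in V(\TG)$, and consider
\[
f(\mathbf{x}) \;=\; \sum_{W} \prod_{i=0}^{k} x_{v_i(W)},
\]
where $W = ((\{v_{i-1},v_i\},t_i))_{i=1}^{k}$ ranges over all $\wait$-restless temporal $s$-$\z$~walks of length exactly $k$ in $\TG$, with vertex sequence $(v_0,\dots,v_k)$. The walk $W$ is a path precisely when the vertices $v_0,\dots,v_k$ are pairwise distinct, i.e., when its monomial is multilinear of degree $k+1$. Hence a $\wait$-restless temporal $s$-$\z$~path of length exactly $k$ exists in $\TG$ if and only if $f$ contains a multilinear monomial of degree $k+1$.

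Second, I would evaluate $f$ at a prescribed point via a layered dynamic program over $\TG$. Let $D[v,t,j]$ store the partial field evaluation summing over all $\wait$-restless temporal $s$-$v$~walks of length exactly $j \in [0,k]$ whose last time-edge has time stamp~$t$. Transitions from $D[u,t',j-1]$ to $D[v,t,j]$ are allowed whenever $(\{u,v\},t) \in \TE(\TG)$ and $0 \le t - t' \le \wait$, and they accumulate a multiplicative factor $x_v$. The number of states is $O(k \cdot \abs{V} \cdot \lifetime)$ and each state is updated from at most $O(\wait)$ predecessors in a given incident edge's time window, yielding $O(k \cdot \abs{\TG} \cdot \wait)$ arithmetic operations per evaluation of $f$.

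Third, I would apply the multilinear monomial detection scheme over $\mathbb{F}_{2^m}$ with $m = O(\log(k/p))$. The Koutis / narrow-sieves template expresses ``$f$ contains a multilinear monomial of degree $k+1$'' as a sum of $2^{O(k)}$ evaluations of $f$ at randomly masked points, each computed with the DP above; in characteristic~$2$, non-multilinear monomials contribute to a vanishing sum, while every multilinear monomial survives with probability bounded below by a constant over the choice of the random scalars. A Schwartz-Zippel-style bound caps the field-identity failure probability of a single trial by $O(k/2^m)$, which is below $1/2$ for the stated choice of $m$. Independent repetition $O(\log(1/p))$ times amplifies the one-sided success probability to $1-p$. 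Collecting factors gives the total running time $O(2^k \cdot k \abs{\TG} \wait \log(k \cdot \nicefrac{1}{p}))$ claimed in the proposition.

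The main obstacle is the third step: making precise, for the walk polynomial $f$ above, the cancellation of walks that revisit a vertex and the survival of path monomials with constant probability. A naïve graph-based pairing argument (exchanging the two ``visits'' of a repeated vertex) can fail in the temporal setting, since the reversal may violate the chronological ordering or the $\wait$-restless constraint. The fix is the standard narrow-sieves device of introducing position-indexed auxiliary random labels alongside the $x_v$'s and summing over Boolean masks; the position index aligns naturally with the length index $j$ in the DP, so the masking integrates into the same $O(k \cdot \abs{\TG} \cdot \wait)$ computation without further blow-up beyond the $2^k$ sieve and the $\log(k/p)$ amplification.
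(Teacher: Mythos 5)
The paper offers no proof of \cref{lem:exact-restless-path}; it is imported as a black box from Thejaswi~\etal~\cite{thejaswi2020restless}, whose algorithm is exactly the algebraic-fingerprinting route you sketch (walk-generating polynomial, a layered dynamic program over time-edges enforcing the \wait-restless constraint, and multilinear-monomial detection over a characteristic-$2$ field with label-based sieving to handle cancellation). Your reconstruction, including the observation that the cancellation argument must act on auxiliary labels rather than on the temporal walk itself, matches the intended proof and is sound at the level of detail given.
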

In our algorithm, \cref{lem:exact-restless-path} can be replaced by any algorithm to find $\wait$-restless temporal paths of 
length~$k$.
For example, with the deterministic $2^{O(k)}\cdot \abs{\TG}\wait$-time algorithm of Casteigts~\etal~\cite{CasteigtsHMZ20} instead of \cref{lem:exact-restless-path}, we would end up with a
$2^{O(k-d)}\cdot \abs{\TG}^3\wait$-time algorithm for \probSRestlessPath{} that is deterministic. 
The precise running time overhead induced by our technique is $O(\abs{\TG}^2(k-d))$ time if we use a deterministic algorithm instead of \cref{lem:exact-restless-path} 
and $O(\abs{\TG}^2(k-d)\log(\nicefrac{k}{(k-d)p}))$ with \cref{lem:exact-restless-path}. 
The running time overhead with the randomized algorithm is larger as we need that 
the error probability of several calls of 
the randomized algorithm accumulate to~$p$.
Although the running time overhead of our technique is is slightly larger with the randomized algorithm of Thejaswi~\etal~\cite{thejaswi2020restless} because a faster overall running time.

For many algorithms based on dynamic programming, we have that the best-case running time
is not better than the worst-case running time.
In our case, we will realize that for sparse real-world graphs it seems that 
the caused overhead stays below the worst case.

In \cref{sec:general-idea}, we set up the geometric perspective on temporal graph
based on the temporal distance between vertices.
This might be of independent interest, as the ideas seem 
to be transferable to other problems where
an above-lower-bound parameterization by shortest temporal paths is possible.
In \cref{sec:dp}, we design a dynamic program to solve \probSRestlessPath{}
in $4^{k-d}\cdot \abs{\TG}^{O(1)}$ time, where~$d$ is the minimum length of a temporal $s$-$\z$~path.
In \cref{sec:final-poof}, we finally prove \cref{thm:restless-abl}.

\subsection{Geometric Perspective on Temporal Graphs Based on Shortest Temporal Paths}
\label{sec:general-idea}
In this section, we present the key ideas of the algorithm behind \cref{thm:restless-abl}.
To this end, we need some notation.
Let $\TGcompact$ be a temporal graph with two distinct vertices $s,z \in V(\TG)$ and $\wait,k \in \NN$.
We define the distance function~$d_\TG \colon V(\TG) \times \set{\lifetime} \to \NN_0\cup \{\infty\}$ which maps 
a vertex $v \in V(\TG)$ and time~$t\in \set{\lifetime}$ 
to the length of a 
shortest temporal $v$-$z$~path in~$\TG$ 
that departs at a time at least~$t$.
If such a temporal path does not exist, then~$d_\TG(v,t)=\infty$.
We drop the subscript~$\TG$ if it is clear from the context.

Intuitively, we now arrange all vertex appearances $(v,t)$ in the plane where the $x$-axis 
describes the distance (via temporal paths) of $v$ to $z$ at time $t$ and 
the $y$-axis describes the time.
Thus, $(v,t)$ gets the point $\left(d(v,t),t\right)$.
Consider \cref{fig:area-idea} for a moment.
We want to visualize a temporal $s$-$z$~path $P$ in this figure.
To this end, we say that $P$ visits vertex appearance~$(v,t)$ if~$P$ visits~$v$ in time step~$t$.
Hence, we can depict a temporal path $P$ by connecting the vertex appearances which are visited by $P$ in the visiting order.
Note that no temporal $v$-$z$ path or walk moves downwards. %
Moreover, among all temporal $v$-$z$~paths that depart at a time of at least~$t$, 
the shortest of them move with each time-edge further towards~$z$ (i.e., to the left).
For example, the dotted line in \cref{fig:area-idea} depicts the trajectory of 
a shortest temporal~$s$-$z$~path with a departure time~$t$.
The temporal path departs at time $t$ and arrives at time $\tau$.
This is not the case for a shortest $\wait$-restless temporal~$s$-$z$~path~$P$---such a temporal path can move to the right or stay at the same point while visiting multiple vertices.
For example, the solid (blue) line in \cref{fig:area-idea} depicts the trajectory of 
a shortest $\wait$-restless temporal~$s$-$z$~path.
Let~$\ell \defeq  k - d(s,1)$.
A crucial observation now is that if $P$ moves ``too far'' to the right or 
stays for ``too long'' at the same spot in the $x$-axis 
while visiting multiple vertices,
then $P$ would be too far away from $z$ (in terms of temporal paths distance) such that~$P$ cannot be of length at most~$k$.
This will lead us to the observation that 
for at least every $(2\ell+1)$-st vertex~$v$ which is visited by $P$ (at time $t$), 
the vertex appearances~$(v,t)$ has the following separation property:
\begin{enumerate}[(i)]
		\item each vertex appearance $(u,t')$ that $P$ visits before $v$ (hence, $v\not=u$)
				is to the right of $(v,t)$ and thus further away from $z$ than $(v,t)$, and
		\item each vertex appearance $(u,t')$ that $P$ visits after $v$ (hence, $v\not=u$) 
				is to the left of $(v,t)$ and thus closer to $z$ than $(v,t)$.
\end{enumerate}
\newcommand{\Area}[2]{\TG_{#1}^{#2}}
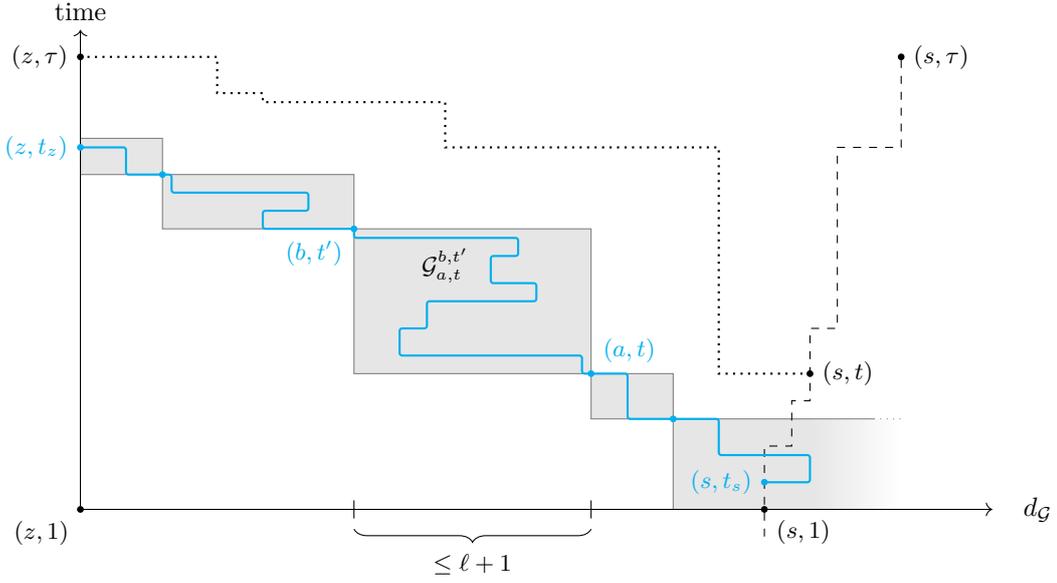
\begin{figure}[ht]
		\centering
		\begin{tikzpicture}[yscale=0.12,xscale=0.12]

			\fill [left color=gray!20!white, right color=white,draw=white] (80,10) rectangle (90,00);

			\fill [gray!20!white] (0,41) rectangle (9,37);
			\draw [gray] (0,41) rectangle (9,37);

			\fill [gray!20!white] (9,37) rectangle (30,31);
			\draw [gray] (9,37) rectangle (30,31);

			\fill [gray!20!white] (30,31) rectangle (56,15);
			\draw [gray] (30,31) rectangle (56,15);

			\fill [gray!20!white] (56,15) rectangle (65,10);
			\draw [gray] (56,15) rectangle (65,10);

			\fill [gray!20!white,draw=gray!20!white] (65,10) rectangle (80,00);
			\draw [gray] (65,0) -- (65,10) -- (87,10);

			\draw [dotted, gray] (87,10) -- (90,10);

			\draw[->] (0,0) -- (0,53);
			\draw[->] (0,0) -- (100,0);
			\draw[-,dashed] (75,-3) -- (75,7) -- (78,7) -- (78,12) -- (80,12) -- (80,20) -- (83,20) -- (83,40) -- (90,40) -- (90,50);
			\node at (105,0) {$d_\TG$};
			\node at (0,55) {time};
			\node[scale=0.5,vertex,label={230:{\small$(z,1)$}}] at (0,0) {};
			\node[scale=0.5,vertex,label={left:{\small$(z,\tau)$}}] at (0,50) {};
			\node[scale=0.5,vertex,label={330:{\small$(s,1)$}}] at (75,0) {};
			\node[scale=0.5,vertex,label={right:{\small$(s,\tau)$}}] at (90,50) {};
			
			\node[scale=0.5,vertex,label={right:{\small$(s,t)$}}] at (80,15) {};

			\node[scale=0.5,vertex,label={182:{\textcolor{cyan}{\small$(s,t_s)$}}},cyan] at (75,3) {};
			\node[scale=0.5,vertex,label={180:{\textcolor{cyan}{\small$(z,t_z)$}}},cyan] at (0,40) {};
			
			\path[draw,black,thick,dotted] (80,15) -- (75,15) -- (75,15) -- (70,15) -- (70,40) -- (40,40) 
					-- (40,45) -- (20,45) -- (20,46) -- (15,46) -- (15,50) -- (0,50) ;

			\node[scale=0.5,vertex,cyan] at (9,37) {};
			\node[scale=0.5,vertex,label={230:{\textcolor{cyan}{\small$(b,t')$}}},cyan] at (30,31) {};
			\node[scale=0.5,vertex,label={30:{\textcolor{cyan}{\small$(a,t)$}}},cyan] at (56,15) {};
			\node[scale=0.5,vertex,cyan] at (65,10) {};
			\node at (40,27) {\small $\Area{a,t}{b,t'}$};

			\draw [cyan,thick,rounded corners=1pt] 
					plot [ tension=2] coordinates { (75,3) (80,3) (80,6) (70,6) (70,10) (60,10) (60,15) (55,15) (55,17) (35,17) 
							(35,20) (38,20) (38,23) (50,23) (50,25) (45,25) (45,28) (48,28) (48,30) (30,30) (30,31) (20,31) 
			(20,33) (25,33) (25,35) (10,35)  (10,37) (5,37) (5,40) (0,40)};

			\draw[black] (30,1) -- (30,-1);
			\draw[black] (56,1) -- (56,-1);
			\draw [decorate,decoration={brace,amplitude=5pt,mirror,raise=4pt},yshift=0pt]
(30,-1) -- (56,-1) node [below,black,midway,yshift=-10pt] {\small $\leq \ell+1$};

		\end{tikzpicture}
		\caption{Illustration of the idea behind the dynamic programming table which is used to show \cref{thm:restless-abl}.
				The $y$-axis describes the time.
				The $x$-axis describes the distance to $z$ (via temporal paths).
				In this plane, a vertex appearance $(v,t)$ gets the position $\left(d(v,t),t\right)$.
				The positions of the vertex appearances of~$s$ are on the dashed line.
				A shortest (non-$\wait$-restless) temporal $s$-$z$~path that departs at time~$t$ is depicted by the dotted line.
				The trajectory of a shortest $\wait$-restless temporal $s$-$z$~path which departs at time~$t_s$
				and arrives at time~$t_z$ 
				is depicted by the solid (blue) line.
				Each gray area depicts a temporal subgraph which we use to compute $\wait$-restless paths
				from the vertex appearance on the right-bottom corner to the vertex appearance on
				the left-top corner, e.g., the temporal graph~$\Area{a,t}{b,t'}$.
		}
		\label{fig:area-idea}
\end{figure}
Moreover, we will observe that two consecutive vertex appearances which have this separation property, have a similar distance to $z$---the distances differ by at most $\ell+1$.
In \cref{fig:area-idea}, these special vertex appearances are at the left-top and right-bottom corners of each gray area.
Our dynamic program tries to guess these vertex appearances and then constructs for each gray area in \cref{fig:area-idea} 
a temporal graph that contains the $\wait$-restless path
from the right-bottom corner to the left-top corner of this area. 
Since we know that these $\wait$-restless temporal paths have length at most $2\ell+1$, 
we can use the algorithm developed in the last section to find them.

Another crucial observation we are going to make is that
two $\wait$-restless temporal paths from the right-bottom corner to
the left-top corner of two distinct gray areas in \cref{fig:area-idea} 
cannot visit the same vertex (except for their endpoints).
This is the case because the distance of a vertex~$v$ to~$z$ can only increase as time goes by.
Thus, if we find for each gray area in \cref{fig:area-idea} a $\wait$-restless temporal path from the right-bottom corner to the left-top corner, 
then this gives us a $\wait$-restless temporal $s$-$z$~path.
Henceforth the details follow.

Before we describe the dynamic programming table in \cref{sec:dp}, 
we define the temporal graph that contains all (shortest) $\wait$-restless paths in a gray area of \cref{fig:area-idea}.
To this end, we first define sets containing all vertex apperances of such a gray area.
For vertex appearances $(a,t), (b,t') \in V(\TG) \times \set{\lifetime}$,
we define
\begin{align*}
\mathcal A_{a,t}^{b,t'} \defeq & 
\left\{ 
(w,t^*) \in V(\TG) \times \set{\lifetime} \mvert  
 d(b,t') <  d(w,t^*) < d(a,t), t^* \in [t,t']
\right\} \text{ and}\\
\mathcal A_{}^{b,t'} \defeq & 
\left\{ 
(w,t^*) \in V(\TG) 
\times 
\set{\lifetime} \mvert  
\infty >
d(w,t^*) > d(b,t'), t^*\leq t'
\right\}.
\end{align*}
Now, the temporal graph~$\Area{a,t}{b,t'}$
for the gray area between~$(a,t)$ and~$(b,t')$ with~$t \leq t'$ is
defined by
\begin{align*}
\TE\left(\Area{a,t}{b,t'}\right) \defeq & 
\left\{ 
(\{v,u\},t^*) \in \TE(\TG) 
\mvert 
(v,t^*),(u,t^*) \in \mathcal A_{a,t}^{b,t'}
\right\}\\
&\cup
\left\{ 
(\{a,v\},t) \in \TE(\TG) 
\mvert  
(v,t) \in \mathcal A_{a,t}^{b,t'}
\right\}\\
&\cup
\left\{ 
(\{v,b\},t^*) \in \TE(\TG) 
\mvert  
 t' -\wait \leq t^*,
 (v,t^*) \in \mathcal A_{a,t}^{b,t'} \cup \{ (a,t) \}
\right\} \text{ and }\\
V\left(\Area{a,t}{b,t'}\right) \defeq & 
\left\{ 
v \in V(\TG) 
\mvert 
\exists (e,t^*) \in  \TE\left(\Area{a,t}{b,t'}\right) \colon v \in e
\right\}.
\end{align*}
For the gray area containing $s$ we have to adjust the definition of the corresponding temporal graph slightly.
To this end, we define~$\Area{}{b,t'}$ with
\begin{align*}
		\TE\left(\Area{}{b,t'}\right) \defeq & 
\left\{ 
(\{v,u\},t^*) \in \TE(\TG) 
\mvert 
(v,t^*),(u,t^*) \in \mathcal A_{}^{b,t'}
\right\}\\
\cup&
\left\{ 
(\{v,b\},t^*) \in \TE(\TG) 
\mvert  
 t' -\wait \leq t^*,
(v,t^*) \in \mathcal A_{}^{b,t'}
\right\} \text{ and }\\
V\left(\Area{}{b,t'}\right) \defeq & 
\left\{ 
v \in V(\TG) 
\mvert 
\exists (e,t^*) \in  \TE\left(\Area{}{b,t'}\right) \colon v \in e
\right\}.
\end{align*}
In the forthcoming section, we will use these definitions to solve \probSRestlessPath{}.

\subsection{The Dynamic Programming Table}
\label{sec:dp}
In this section, we describe the table~$T$ which we are going to use for the dynamic programming,
and show its correctness.

Intuitively, the table~$T$ has for each 
vertex appearance~$(u,t')$ an entry, and
if this entry contains a number~$p<\infty$,
then~$p$ is the length of the shortest $\wait$-restless 
temporal~$s$-$u$~path that only visits vertex appearances
which are, in \cref{fig:area-idea}, below and to the right 
of~$(u,t')$.

Let~$ I \defeq (\TG,s,z,\wait,k)$ be an instance of \probSRestlessPath{},
where $k = d(s,1)+\ell$.
For all~$(u,t') \in V(\TG) \times \set{\lifetime}$
such that there is an~$e \in E_{t'}$ with~$v \in e$,
we define~$T$ as follows.
If $d(s,1) - d(u,t') \leq \ell$,
then
\begin{align}
		\label{dp:alb1}
T[u,t'] \defeq  
\begin{cases}
		0, & \text{if $u = s$;}\\
\ell', & \text{if $u \not= s$ and $\ell' \in [2\ell]$ is the length of a} \\
	   & \text{shortest $\wait$-restless $s$-$u$~path in $\Area{}{u,t'}$;} \\
\infty, & \text{otherwise.}
\end{cases}
\end{align}
If $d(s,1) - d(u,t') > \ell$,
then
\begin{align}
\label{dp:alb2}
T[u,t'] \defeq  \min \left( \{\infty\} \cup \left\{ T[v,t] + \ell' \middle\vert 
\begin{array}{r}
		t \in \set{t'}, e \in E_t, v \in e \text{, where}\\
d(v,t) > d(u,t') \geq d(v,t) - \ell-1\\
\text{and $\ell' \in [2\ell+1]$ is the length of a } \\
\text{shortest $\wait$-restless $v$-$u$~path in $\Area{v,t}{u,t'}$}
\end{array}
\right\}
\right)
\end{align}
In the end, we will report that $I$ is a \yes-instance if and only if
there is a $t \in \set{\lifetime}$ such that $T[z,t] \leq k$.
We will show the correctness of this in the following lemmata.
We start with the backwards direction.
\begin{lemma}
		\label{lem:restless-alb-forward}
		Let $(\TG,s,z,\wait,k)$ be an instance of \probSRestlessPath{}.
		If $T[z,t_z]\leq k < \infty$ (defined in \eqref{dp:alb1} and \eqref{dp:alb2}), 
		then there is a $\wait$-restless temporal $s$-$z$ path of length at most $k$ in $\TG$.
\end{lemma}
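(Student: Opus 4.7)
The plan is to unfold the recursive definition of $T[\z,t_z]$, extracting a ``backbone'' of checkpoint vertex appearances $(u_m,t_m) = (\z,t_z), (u_{m-1},t_{m-1}), \dots, (u_1,t_1)$ together with $\wait$-restless path segments between consecutive checkpoints, and then concatenating those segments into a single $\wait$-restless temporal $s$-$\z$~path. Iteratively applying~\eqref{dp:alb2} starting from $T[\z,t_z]$ and terminating at the base case~\eqref{dp:alb1} produces a strictly decreasing distance sequence $d(u_1,t_1) > d(u_2,t_2) > \dots > d(u_m,t_m) = 0$, a non-decreasing time sequence $t_1 \leq t_2 \leq \dots \leq t_m$, and, for each $i \in [m-1]$, a $\wait$-restless $u_i$-$u_{i+1}$~path $P_i$ of length $\ell_i$ in $\Area{u_i,t_i}{u_{i+1},t_{i+1}}$; the base case supplies a $\wait$-restless $s$-$u_1$~path $P_0$ of length $\ell_0$ in $\Area{}{u_1,t_1}$ (empty if $u_1 = s$). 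By construction, $\sum_{i=0}^{m-1}\ell_i = T[\z,t_z] \leq k$, so it remains to show that the concatenation $P := P_0 P_1 \cdots P_{m-1}$ is a valid (simple) $\wait$-restless temporal $s$-$\z$~path.

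Two properties need to be verified: the $\wait$-restless constraint holds across the splicing points, and $P$ visits no vertex twice. The former follows directly from the construction of the auxiliary temporal graphs: every time-edge of $P_i$ incident with the endpoint $u_{i+1}$ carries a time stamp in $[t_{i+1}-\wait, t_{i+1}]$---this is precisely what the condition $t'-\wait \leq t^*$ in the definitions of $\Area{}{u_1,t_1}$ and $\Area{u_i,t_i}{u_{i+1},t_{i+1}}$ enforces---while the next segment $P_{i+1}$ departs $u_{i+1}$ with a time-edge of time stamp exactly $t_{i+1}$, so any two consecutive time-edges across the splice differ by at most $\wait$ in time.

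The main obstacle is the simplicity of $P$. Here I will exploit the time-monotonicity of the distance function, namely $d(w,t) \leq d(w,t')$ whenever $t \leq t'$, which holds because restricting to later departure times can only shrink the set of admissible temporal $w$-$\z$~paths. Assume for contradiction that a vertex $w$ distinct from the checkpoint vertices appears both as an intermediate vertex of $P_i$ (at some time $t_w^i$) and of $P_j$ (at some time $t_w^j$) with $i < j$. The definitions of $\mathcal{A}_{a,t}^{b,t'}$ and $\mathcal{A}^{b,t'}$ give $d(w,t_w^i) > d(u_{i+1},t_{i+1})$ and $d(w,t_w^j) < d(u_j,t_j)$, while $t_w^i \leq t_{i+1} \leq t_j \leq t_w^j$ combined with monotonicity yields $d(w,t_w^i) \leq d(w,t_w^j)$. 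Together with $d(u_j,t_j) \leq d(u_{i+1},t_{i+1})$ (from the strictly decreasing backbone distances), we obtain
\[
d(w,t_w^i) > d(u_{i+1},t_{i+1}) \geq d(u_j,t_j) > d(w,t_w^j) \geq d(w,t_w^i),
\]
a contradiction. An analogous argument rules out the case that a checkpoint $u_r$ occurs as an intermediate vertex of some $P_i$ with $r \notin \{i,i+1\}$. Once simplicity is established, $P$ is a $\wait$-restless temporal $s$-$\z$~path of length $\sum_i \ell_i \leq k$, completing the proof.
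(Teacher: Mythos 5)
Your proposal is correct and follows essentially the same route as the paper: unfold the recursion \eqref{dp:alb1}--\eqref{dp:alb2} into restless segments living in the graphs $\Area{}{u_1,t_1}$ and $\Area{u_i,t_i}{u_{i+1},t_{i+1}}$, splice them (restlessness at the splices coming from the $t'-\wait \leq t^*$ condition on edges at the area's top corner and the fixed departure time at its bottom corner), and derive simplicity from the monotonicity of $d(w,\cdot)$ in time combined with the distance windows defining the areas. The only difference is presentational: the paper packages this as an induction on $d(u,t')$ with the strengthened invariant that the partial $s$-$u$ path lies in $\Area{}{u,t'}$ and arrives in $[t'-\wait,t']$, proving $V(\Area{}{v,t}) \cap V(\Area{v,t}{u,t'}) = \{v\}$ at each step, whereas you unroll the whole backbone and argue pairwise disjointness of all segments directly---the key contradiction chain is the same.
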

\begin{proof}
		We show by induction on the distance to $z$  that
		if $T[u,t'] = k' < \infty$, then there is a $\wait$-restless $s$-$u$~path of length $k'$ in~$\Area{}{u,t'}$
		which arrives at $u$ at some time step in $[t'-\wait,t']$.
		
		Note that all  temporal $s$-$u$~paths in $\Area{}{u,t'}$ arrive at some time in $[t'-\wait,t']$.
		By~\eqref{dp:alb1}, for each vertex appearance~$(u,t')$ with $d(s,1)-d(u,t') \leq \ell$ the induction hypothesis is true---this is our base case.

		Now let $(u,t')$ be a vertex appearance with $T[u,t'] = k' < \infty$.
		Assume that for all vertex appearances $(v,t)$ with $d(v,t) > d(u,t')$ we have that
		if $T[v,t] = k'' < \infty$, then there is a $\wait$-restless temporal $s$-$v$~path of length $k''$ in~$\Area{}{v,t}$
		which arrives at~$v$ at some time step in $[t-\wait,t]$.
		Since $T[u,t'] = k'$, we know by \eqref{dp:alb2} that 
		there is a vertex appearance $(v,t)$ with~$T[v,t]=k''$, $t\leq t'$, and~$d(v,t) > d(u,t')$.
		Moreover, there is a $\wait$-restless temporal $v$-$u$~path~$P_2$ in~$\Area{v,t}{u,t'}$ of length~$\ell'= k'-k''$.
		By the definition of~$\Area{v,t}{u,t'}$, $P$ departs at time~$t$ and arrives at some time in~$[t'-\wait,t']$.
		By assumption, there is a $\wait$-restless temporal $s$-$v$~path~$P_1$ of length~$k''$ in~$\Area{}{v,t}$
		which arrives at~$v$ at some time step in~$[t-\wait,t]$.
		We now append the time-edges of $P_2$ to the time-edges of $P_1$ and claim that 
		the resulting time-edge sequence $P$ is a $\wait$-restless temporal $s$-$u$~path of length~$k'$ which arrives at~$u$
		at some time in~$[t'-\wait,t']$.
		Observe that $P$ is a $\wait$-restless temporal $s$-$u$~walk of length $k' = k''+\ell'$,
		as
		\begin{itemize}
			\item $P_1$ is $\wait$-restless, of length~$k''$, and arrives at $v$ at some time $t^* \in [t-\wait,t]$, and
			\item $P_2$ is of length $\ell'$ and departs at time~$t$.
		\end{itemize}
		Moreover, the arrival time of $P$ is the same as the arrival time of $P_2$.

		It remains to show that $P$ does not visit a vertex twice.
		To see this, we show that~$V(\Area{}{v,t}) \cap V(\Area{v,t}{u,t'}) = \{v\}$.
		This will complete the proof, since
		we know that~$V(P_1) \subseteq V(\Area{}{v,t})$, $V(P_2) \subseteq V(\Area{v,t}{u,t'})$, $P_1$ ends at vertex~$v$, and~$P_2$ starts at vertex~$v$.
		By definition, we have that~$v \in (V(\Area{}{v,t}) \cap V(\Area{v,t}{u,t'}))$.
		Assume towards a contradiction that there is a vertex~$w \in  (V(\Area{}{v,t}) \cap V(\Area{v,t}{u,t'}))\setminus \{v\}$.
		Then, there must be time steps $t_1,t_2$ such that $(w,t_1) \in \mathcal A^{v,t} \cup \{ (u,t') \}$ 
		and $(w,t_2) \in \mathcal A_{v,t}^{u,t'}$.
		Note that $d(w,t_1) > d(v,t) > d(w,t_2)$ and 
		hence each temporal $w$-$z$~path in $\TG$ that 
		departs not earlier than $t_1$ is 
		longer than a shortest $w$-$z$~path in $\TG$ that 
		departs not earlier than $t_2$.
		This is a contradiction because $t_1 \leq t \leq t_2$.
\end{proof}

To show the forward direction of the correctness, 
we introduce further notation.
Recall from the definition of the dynamic programming table~$T$ in \eqref{dp:alb1} and \eqref{dp:alb2} that $\ell = k - d(s,1)$.
Assume the input instance $I$ is a \yes-instance.
Thus there is a $\wait$-restless temporal $s$-$\z$~path $P = ((\{v_{i-1},v_i\},t_i))_{i=1}^k$
of length at most $d(s,1)+\ell$ in $\TG$. 
Let $s=v_0,v_1,\dots,v_{k}=z$ be the order in which $P$ visits the vertices in $V(P)$.
For simplicity, let $t_0 \defeq 1$ and $t_{k+1} \defeq  t_k$.
For all $i \in [0,k]$,
	we say that $v_i$ is a \emph{distance separator} if 
	\begin{enumerate}[(i)]
			\item $d(v_i,t_{i+1}) < d(v_j,t_{j+1})$ for all $j \in [0,i-1]$, and
			\item $d(v_i,t_{i+1}) > d(v_j,t_{j+1})$ for all $j \in [i+1,k]$.
	\end{enumerate}

	Before we show the forward direction of the correctness of the dynamic programming table~$T$,
	we show that $P$ visits a distance separator on regular basis.
	\begin{lemma}
			\label{a-claim}
	For all~$i \in [0,k]$ there is a $j\in [0,2\ell]$ such that $v_{i+j}$ is a distance separator.
	\end{lemma}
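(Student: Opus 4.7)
The plan is a pigeonhole argument on the prefix-minimum and suffix-maximum envelopes of the distance sequence $d_j := d(v_j, t_{j+1})$ along $P$. I would first establish three elementary inequalities: (a)~$d_j \leq k - j$, since the suffix of $P$ from index~$j$ is a temporal $v_j$-$z$~path of length $k-j$ departing at time $t_{j+1}$; (b)~the Lipschitz bound $d_{j+1} \geq d_j - 1$, obtained from $d_j \leq 1 + d(v_{j+1}, t_{j+1}) \leq 1 + d_{j+1}$, where the first inequality uses the fact that shortest temporal walks and shortest temporal paths coincide (any walk shortcuts to a path of no greater length); and (c)~$d_0 = d(s, t_1) \geq d(s, 1) = k - \ell$. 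Iterating~(b) forward from index $0$ yields the matching lower bound $d_j \geq k - \ell - j$ whenever $j \leq k - \ell$.

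Next, introduce the envelopes $f_j := \min_{j' \leq j} d_{j'}$ and $e_j := \max_{j' \geq j} d_{j'}$, with the conventions $f_{-1} := +\infty$ and $e_{k+1} := -\infty$. Unrolling the definition of a distance separator shows that $v_j$ is a distance separator if and only if $f_{j-1} > d_j > e_{j+1}$, equivalently $j \in S_f \cap S_e$ with $S_f := \{j : f_{j-1} > f_j\}$ and $S_e := \{j : e_j > e_{j+1}\}$. By~(b), every strict drop of $f$ and of $e$ has size exactly~$1$: if $f_j = d_j < f_{j-1}$ then $d_j \geq d_{j-1} - 1 \geq f_{j-1} - 1$, forcing $d_j = f_{j-1} - 1$, and an analogous argument handles~$e$. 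Consequently, the number of strict drops of $f$ (respectively $e$) inside a window equals the total drop of $f$ (respectively $e$) across that window.

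The case $i + 2\ell \geq k$ is immediate because $v_k$ is always a distance separator (condition~(ii) is vacuous and $d_k = 0 < d_{j'}$ for $j' < k$, since $P$ visits $z$ only at its final position) and $k - i \in [0, 2\ell]$. Otherwise $i + 2\ell < k$, and combining (a), (c), and the iterated Lipschitz bound I obtain $f_{i-1} \geq k - \ell - i + 1$, $f_{i + 2\ell} \leq k - i - 2\ell$, $e_i \geq k - \ell - i$, and $e_{i + 2\ell + 1} \leq k - i - 2\ell - 1$, so $|S_f \cap [i, i + 2\ell]| \geq \ell + 1$ and $|S_e \cap [i, i + 2\ell]| \geq \ell + 1$. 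Since $|[i, i + 2\ell]| = 2\ell + 1$, pigeonhole forces $|S_f \cap S_e \cap [i, i + 2\ell]| \geq (\ell + 1) + (\ell + 1) - (2\ell + 1) = 1$, producing the desired distance separator $v_{i^*}$ with $i^* - i \in [0, 2\ell]$. The main obstacle is justifying the Lipschitz bound~(b): although $d$ is defined via temporal paths, the inequality still holds because any temporal walk can be shortcut (by removing a vertex-revisiting subwalk) into a temporal path of no greater length, so shortest temporal walks and shortest temporal paths agree.
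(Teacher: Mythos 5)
Your proof is correct, but it takes a genuinely different route from the paper. The paper proves the claim by reverse induction on $i$: it takes the first distance separator $v_n$ after position $i$, introduces the first index $f$ whose distance to $z$ exceeds $d(v_n,t_{n+1})$ by exactly $\ell+1$, bounds $n-f\le 2\ell+1$ via the length budget $k\le d(s,1)+\ell$, and then derives a contradiction by pigeonholing repeated distance values among non-separators. You instead argue non-inductively and uniformly over every window of $2\ell+1$ consecutive indices: from the two-sided squeeze $k-\ell-j\le d_j\le k-j$ (the upper bound from the suffix of $P$, the lower bound from departure-time monotonicity plus the one-step Lipschitz property $d_{j+1}\ge d_j-1$, which you correctly justify by shortcutting temporal walks to temporal paths), you show the prefix-minimum and suffix-maximum envelopes each must drop at least $\ell+1$ times inside the window, each drop being a unit drop, and inclusion--exclusion over the $2\ell+1$ indices yields an index lying in both drop sets, which is exactly a distance separator. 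The shared core is the same $\ell$-budget pigeonhole, but your execution replaces the paper's induction and auxiliary indices $n,f$ by a transparent double-counting argument that treats all windows at once; this is arguably cleaner and makes the role of the two bounds on $d_j$ explicit.

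One small boundary point to patch: with the convention $f_{-1}=+\infty$, the statement that every strict drop of $f$ has size exactly $1$ (and hence that the number of drops equals the total drop) fails at $j=0$, so for the window starting at $i=0$ you should either set $f_{-1}:=d_0+1$ or count index $0$ (which always lies in $S_f$) separately and apply the unit-drop count only on $[1,2\ell]$, using $f_0=d_0\ge k-\ell$; the required bound $\abs{S_f\cap[0,2\ell]}\ge\ell+1$ still follows. This is a one-line repair, not a gap in the argument.
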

	\begin{proof}
		We show this statement with a reverse induction on the length of $P$.
	As $z$ is clearly a distance separator, the claim is true for all values in $[\max\{0,k-2\ell\},k]$. 
	This is the base case of our induction.

	Let $k - 2\ell > 0$ and let $i \in [0,k-2\ell-1]$ and
	assume that for all $i' \in [i+1,k]$ there is a  $j'\in [0,2\ell]$ such that~$v_{i'+j'}$ is a distance separator.
	Let $n \in [i',k]$ be the smallest possible number such that $v_n$ is a distance separator.
	Let $f \in [k]$  be the smallest possible number such that $d(v_f,t_{f+1}) - d(v_n,t_{n+1}) = \ell+1$.
	Note that if such an $f$ does not exist, then the claim is true.
	Hence, we assume that such an $f$ exists.
	Note that $f \leq i$, otherwise $n$ is not the smallest possible number.

	We now show that $n - f \leq 2\ell+1$.
	Assume towards a contradiction
	that the temporal $v_f$-$v_n$~path contained in $P$ has length $n - f > 2\ell+1$.
	This is a lower bound for the length of $P$.
	We get the following.
	\begin{align*}
			 d(s,1) - d(v_f,t_{f+1}) + 2\ell+1 + d(v_n,t_{n+1}) &< k = d(s,1) + \ell\\
			\implies d(v_n,t_{n+1}) - d(v_f,t_{f+1}) + \ell+1  &< 0\\
			\implies  \ell+1  &< d(v_f,t_{f+1}) + d(v_n,t_{n+1}) 
	\end{align*}
	This is a contradiction to $d(v_f,t_{f+1}) - d(v_n,t_{n+1}) = \ell+1$.

	Next, we show that, between $v_{f}$ and $v_{n-1}$, $P$ must visit a distance separator. 
	Assume towards a contradiction that $v_{n-j''}$ is not a distance separator, for all~$j'' \in [n-f]$.
	Hence, for all $p \in [d(v_n,t_{n+1})+1, d(v_f,t_{f+1})]$, 
	there are two distinct $q,r \in [f,n-1]$ such that
	$d(v_r,t_{r+1}) = d(v_q,t_{q+1}) = p$.
	Since~$[d(v_n,t_{n+1})+1, d(v_f,t_{f+1})]=\ell$, we get by the pigeonhole principle
	that the temporal $v_f$-$v_n$~path contained in $P$ has length $n - f > 2\ell+1$---a contradiction.
\end{proof}

Finally, we are set to show the forward direction.
\begin{lemma}
		\label{lem:restless-abl-backwards}
		Let $(\TG,s,z,\wait,k)$ be an instance of \probSRestlessPath{}.
		If there is a $\wait$-restless temporal $s$-$z$ path in $\TG$ of length at most $k$ with arrival time $t_k$,
		then $T[z,t_k]\leq k$ (defined in \eqref{dp:alb1} and \eqref{dp:alb2}).
\end{lemma}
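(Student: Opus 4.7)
Fix a $\wait$-restless temporal $s$-$z$~path $P = ((\{v_{i-1}, v_i\}, t_i))_{i=1}^{k^*}$ of length $k^* \leq k$ with arrival time $t_{k^*}$ (extending the indexing by $t_0 := 1$ and $t_{k^*+1} := t_{k^*}$ as in the excerpt). The plan is to apply Lemma~\ref{a-claim} iteratively to extract a sequence of distance separators $s = v_{i_0}, v_{i_1}, \ldots, v_{i_m} = z$ along $P$ (treating $v_{i_0} := s$ as a virtual separator at $i_0 := 0$) with $i_{\mu+1} - i_\mu \leq 2\ell + 1$ for every $\mu$, and then to prove by induction on $\mu \in [0, m]$ that $T[v_{i_\mu}, t_{i_\mu+1}] \leq i_\mu$. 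Taking $\mu = m$ yields $T[z, t_{k^*}] \leq k^* \leq k$, which is the desired conclusion upon setting $t_k := t_{k^*}$.

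\textbf{Induction.}
The base case $\mu = 0$ is $T[s, t_1] = 0$ and follows directly from \eqref{dp:alb1}, since $u = s$. For the inductive step, I set $v := v_{i_\mu}$, $t := t_{i_\mu+1}$, $u := v_{i_{\mu+1}}$, $t' := t_{i_{\mu+1}+1}$ and invoke the recursion \eqref{dp:alb2} (or the base case \eqref{dp:alb1} when $d(s, 1) - d(u, t') \leq \ell$). Temporality of $P$ gives $t \leq t'$; the separator property~(i) of $v_{i_{\mu+1}}$ supplies $d(v, t) > d(u, t')$; and the segment of $P$ from $v_{i_\mu}$ to $v_{i_{\mu+1}}$ is a $\wait$-restless $v$-$u$~path of length $\ell' := i_{\mu+1} - i_\mu \in [1, 2\ell + 1]$. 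Combining this with the inductive hypothesis $T[v, t] \leq i_\mu$, the recursion produces $T[u, t'] \leq i_\mu + \ell' = i_{\mu+1}$, as required.

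\textbf{Main obstacle.}
Two verifications feed into the inductive step. First, I need the distance-gap condition $d(u, t') \geq d(v, t) - \ell - 1$ between consecutive separators; the plan is to replay the ``smallest $f$'' pigeonhole argument from the proof of Lemma~\ref{a-claim}, forcing a contradictory extra separator strictly between $v_{i_\mu}$ and $v_{i_{\mu+1}}$ if the gap exceeded $\ell + 1$. Second, I must show that the $P$-segment is entirely contained in $\Area{v,t}{u,t'}$: the separator properties at the two endpoints sandwich each intermediate departure appearance $(v_j, t_{j+1})$ for $j \in (i_\mu, i_{\mu+1})$ strictly between $d(u, t')$ and $d(v, t)$, thereby placing it into $\mathcal A_{v, t}^{u, t'}$, and monotonicity of $d(v_j, \cdot)$ together with the separator structure extends this to the arrival-time appearances $(v_j, t_j)$ demanded by internal edges. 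The first and last edges of the segment, incident to $v_{i_\mu}$ at time $t_{i_\mu+1}$ and to $v_{i_{\mu+1}}$ at time $t_{i_{\mu+1}}$, are absorbed by the special boundary clauses of $\TE(\Area{v,t}{u,t'})$, where the $\wait$-restlessness of $P$ precisely ensures $t_{i_{\mu+1}} \in [t' - \wait, t']$. I expect the arrival-time bookkeeping for internal edges, together with the ``smallest $f$'' adaptation for the distance-gap bound, to be the main sources of technicality.
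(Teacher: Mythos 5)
Your overall plan coincides with the paper's: decompose $P$ at its distance separators, bound consecutive separators via \cref{a-claim}, show the gap in $d(\cdot,\cdot)$ between consecutive separators is at most $\ell+1$ by a pigeonhole on non-separators, and show the segment between them lies in the two-sided area graph so that \eqref{dp:alb2} propagates the bound. For steps between two genuine separators this is exactly the paper's inductive step. The genuine gap is your very first step, where you treat $s$ as a ``virtual separator'' anchored at $T[s,t_1]=0$ and then apply the recursion \eqref{dp:alb2} to reach the first true separator $u=v_{i_1}$. The vertex $s$ need not be a distance separator, and precisely when it is not, property~(ii) fails for it: some vertex appearance $(v_j,t_{j+1})$ with $0<j<i_1$ satisfies $d(v_j,t_{j+1})\geq d(s,t_1)$ (this is exactly what witnesses that $s$ is not a separator, since everything from $v_{i_1}$ onwards is strictly closer to $\z$). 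Such an appearance is excluded from $\mathcal A_{s,t_1}^{u,t'}$, which only admits appearances with distance strictly \emph{below} $d(s,t_1)$, so the $s$-$u$ prefix of $P$ is \emph{not} contained in $\Area{s,t_1}{u,t'}$ and the term you want in \eqref{dp:alb2} simply is not there. Both of your announced repairs break down here for the same reason: the ``sandwiching'' of intermediate appearances uses the separator property at the \emph{left} endpoint, and the ``smallest $f$'' pigeonhole for the gap bound $d(u,t')\geq d(v,t)-\ell-1$ also presupposes that no re-ascent above $d(v,t)$ occurs, which only separator property~(ii) guarantees.

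Your parenthetical fallback ``or the base case \eqref{dp:alb1} when $d(s,1)-d(u,t')\leq\ell$'' does not close this hole as written: under \eqref{dp:alb1} the bound $T[u,t']\leq i_1$ is not ``produced by the recursion'' at all; one must instead argue that the $s$-$u$ prefix of $P$ is a $\wait$-restless path inside the \emph{one-sided} area graph $\Area{}{u,t'}$ (which has no upper distance bound and therefore does contain the re-ascending prefix) and that its length is at most $2\ell$, so that the shortest such path qualifies in \eqref{dp:alb1}. This is precisely how the paper handles it: its induction is anchored at the first actual separator $v_{\sigma(1)}$, using \cref{a-claim} to get $\sigma(1)\leq 2\ell$ and the one-sided graph $\Area{}{v_{\sigma(1)},t_{\sigma(1)+1}}$, and never invokes \eqref{dp:alb2} with $s$ as the left anchor. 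You would also need to address the case $d(s,1)-d(u,t')>\ell$ for the first separator (where \eqref{dp:alb2} is the operative definition and your recursion-from-$s$ term is unavailable); ruling this case out, or handling it, requires an argument about the initial excursion of $P$ that your proposal does not contain. Replacing your $\mu=0\to\mu=1$ step by the paper's base-case treatment of the first separator repairs the proof; the remaining steps of your induction then match the paper's.
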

\begin{proof}
	
Let $P = ((\{v_{i-1},v_i\},t_i))_{i=1}^k$ be a shortest $\wait$-restless temporal $s$-$\z$~path
of length at most $k = d(s,1)+\ell$ in $\TG$. %
Let $s=v_0,v_1,\dots,v_{k}=z$ be the order in which $P$ visits the vertices in $V(P)$.
For simplicity, let $t_0 \defeq 1$ and $t_{k+1} \defeq  t_k$.
Moreover, let $m$ be the number of distance separators visited by $P$ and
let $\sigma \colon \set{m} \to [0,k]$ be an injective function such that $v_{\sigma(i)}$ is the $i$-th distance separator
which is visited by $P$ (from $s$ to $z$), for all $i \in \set{m}$.
Note that, the vertex~$v_{\sigma(i)}$ is the $i$-th distance separator visited by $P$ and thus $\sigma(i)$ also 
describes the length of the $\wait$-restless temporal $s$-$v_{\sigma(i)}$~subpath contained in $P$.

We now show that for all $i \in \set{m}$ we have that $T[v_{\sigma(i)},t_{\sigma(i)+1}] \leq \sigma(i)$.
	If $\sigma(1)=0$, then $s$ is a distance separator and the claim is clearly true, see \eqref{dp:alb1}.
	Otherwise, by \cref{a-claim}, we have %
	$\sigma(1) \leq 2\ell$.
	Hence, $P$ contains a $\wait$-restless 
	temporal $s$-$v_{\sigma(1)}$~path of length $\sigma(1) \leq 2\ell$ which is contained in $\Area{}{v_{\sigma(1)},t_{\sigma(1)+1}}$.
	Thus, $T[v_{\sigma(1)},t_{\sigma(1)+1}] \leq \sigma(1)$.

	Now assume that for some $i \in [2,m]$ we have that $T[v_{\sigma(i-1)},t_{\sigma(i-1)+1}] \leq \sigma(i-1)$.
	Observe that $t_{\sigma(i-1)+1} \leq  t_{\sigma(i)+1}$ and that $d(v_{\sigma(i-1)},t_{\sigma(i-1)+1}) > d(v_{\sigma(i)},t_{\sigma(i)+1})$.
	By \cref{a-claim}, we have that $\sigma(i) - \sigma(i-1) \leq 2\ell +1$ and
	that the $\wait$-restless temporal $v_{\sigma(i-1)}$-$v_{\sigma(i)}$~path $Q$ contained in $P$ is of length $\sigma(i) - \sigma(i-1) \leq 2\ell +1$.
	As all of the at most $2\ell$ vertices in $V(Q) \setminus \{ v_{\sigma(i-1)}, v_{\sigma(i)}\}$
	are not distance separators, we have by the pigeonhole principle that $d(v_{\sigma(i-1)},t_{\sigma(i-1)+1}) - d(v_{\sigma(i)},t_{\sigma(i)+1}) \leq \ell+1$.
	Moreover, note that $Q$ 
	in $\Area{v_{\sigma(i-1)},t_{\sigma(i-1)+1}}{v_{\sigma(i)},t_{\sigma(i)+1}}$, because $v_{\sigma(i-1)}$ and $v_{\sigma(i)}$ are distance separators.
	Hence, by \eqref{dp:alb2}, we have that $T[v_{\sigma(i)},t_{\sigma(i)+1}] \leq T[v_{\sigma(i-1)},t_{\sigma(i-1)+1}] + \sigma(i) - \sigma(i-1) \leq \sigma(i)$,
	as we have $T[v_{\sigma(i-1)},t_{\sigma(i-1)+1}] \leq \sigma(i-1)$ by assumption.

	Since $z=v_k$, we have that $k$ is the only number in $[0,k]$ with $d(v_k,t_{k+1})=0$.
	Hence, $v_k$ is the last distance separator and thus $\sigma(m) = k$.
	Finally, by our induction, we have that $T[z,t_{k+1}] \leq k$.
\end{proof}
\subsection{Putting the Pieces Together}
\label{sec:final-poof}
In this section, we finally show \cref{thm:restless-abl}.
Towards this end, we first show that we can compute all necessary values of our distances function $d(\cdot,\cdot)$ in linear time.
\begin{lemma}
		\label{lem:compute-distances}
	Given a temporal graph $\TGcompact$ and a vertex $z$,
	one can compute in~$O(\abs{\TG})$~time the value $d(v,t)$, for all $v\in V$ and $t \in [\lifetime]$ where~$v$ is not 
	isolated in the graph~$(V,E_t)$.
\end{lemma}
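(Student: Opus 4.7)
The plan is to reduce the computation to a single-source shortest-path problem in a static directed graph $H$ with edge weights in $\{0,1\}$, whose total size is $O(\abs{\TG})$, and then apply the classical deque-based 0-1 BFS which runs in time linear in the size of $H$.

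First, I would construct $H$ as follows. For each $v \in V$, let $T_v \subseteq \set{\lifetime}$ denote the set of time steps at which $v$ is not isolated in $(V, E_t)$, and let $V_t$ denote the set of vertices that are not isolated in $(V,E_t)$. The vertex set of $H$ consists of all pairs $(v,t)$ with $v \in V$ and $t \in T_v$, together with an extra super-source $s^*$. The edges of $H$ come in three kinds: a weight-$0$ edge $s^* \to (z,t)$ for every $t \in T_z$; for every time-edge $(\{u,v\},t) \in \TE(\TG)$, two weight-$1$ edges $(u,t) \to (v,t)$ and $(v,t) \to (u,t)$; and for every vertex $v$ and every pair of consecutive elements $t < t'$ of $T_v$, a weight-$0$ backward-in-time edge $(v,t') \to (v,t)$. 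Both the number of vertices and the number of edges of $H$ are $O(\abs{\TG})$, because $\sum_t \abs{V_t} \leq 2\sum_t \abs{E_t} \leq 2\abs{\TG}$ and each of the three edge classes contributes at most $O(\abs{\TG})$ edges; a single sweep through $(E_i)_{i=1}^{\lifetime}$ produces $H$ in $O(\abs{\TG})$ time.

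Second, I would prove that the distance in $H$ from $s^*$ to $(v,t)$ equals $d(v,t)$ for every active vertex appearance $(v,t)$. For the upper bound, take a shortest temporal $v$-$z$ path $v = v_0, v_1, \ldots, v_m = z$ with times $t_1 \leq t_2 \leq \dots \leq t_m$ and $t_1 \geq t$, and trace the $H$-path that starts at $s^*$, jumps for free to $(z, t_m)$, and then alternates weight-$1$ temporal edges (traversed in the reverse chronological order $t_m, t_{m-1}, \dots, t_1$) with chains of weight-$0$ backward-in-time edges that drop $(v_i, t_{i+1})$ to $(v_i, t_i)$ along consecutive active times of $v_i$, eventually reaching $(v_0, t_1)$; a final chain of free edges lowers $(v, t_1)$ to $(v, t)$, giving total weight $m$. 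For the matching lower bound, any $s^*$-to-$(v,t)$ path in $H$ uses a sequence of weight-$1$ temporal edges whose timestamps are weakly decreasing, because every other edge in $H$ either emanates from $s^*$ or strictly decreases time; reversing that sequence produces a temporal $v$-$z$ walk of the same length with departure time at least $t$, and since a shortest temporal walk can be shortcut to a temporal path of no greater length by excising repeated vertices (the time labels remain non-decreasing), its length is at least $d(v,t)$. Finally, the distances from $s^*$ in $H$ can be computed in $O(\abs{V(H)} + \abs{E(H)}) = O(\abs{\TG})$ time by the standard deque-based 0-1 BFS (push endpoints of weight-$0$ edges to the front of the deque and endpoints of weight-$1$ edges to the back), and the resulting distance labels yield the requested values $d(v,t)$.

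The main subtlety, and in my view the only real obstacle, is arguing that including backward-in-time edges only between consecutive elements of $T_v$ (rather than between all pairs of active times of $v$) still captures every shortest temporal $v$-$z$ path; this turns out to be fine because consecutive weight-$0$ edges may be chained at no cost, so any two active times of the same vertex are connected by a free path in $H$. A secondary delicacy is the distinction between shortest temporal walks and shortest temporal paths, which is resolved by the standard shortcut argument used in the lower-bound direction above.
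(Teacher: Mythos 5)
Your proposal is correct and follows essentially the same route as the paper: a $0/1$-weighted static digraph over the non-isolated vertex appearances with weight-$1$ arcs inside each layer, weight-$0$ backward-in-time arcs between consecutive appearances of the same vertex, and a source attached to the appearances of~$z$, solved by a linear-time $0$-$1$ BFS. The only differences are cosmetic (a super-source linked to all appearances of $z$ instead of only the latest one, and a deque-based $0$-$1$ BFS in place of the paper's ``slightly modified'' BFS), and your explicit walk-to-path shortcutting argument for the lower bound is a welcome addition the paper leaves implicit.
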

\begin{proof}
		We will construct a directed graph~$D$ where each arc have either weight zero or one
		such that the weight of a shortest $z$-$v_t$~path equals the value of~$d(v,t)$, 
		for all $v\in V$ and $t \in [\lifetime]$ where~$v$ is not isolated in the graph~$(V,E_t)$.
		Then, a slightly modified breadth-first search will do the task.

		We compute the set $\mathcal V$ of non-isolated vertex appearances. 
		That is, $\mathcal V \defeq  \{ (v,t) \in V(\TG) \times \set{\lifetime} \mid \exists e \in E_t\colon v \in e\}$.
		Note that this can be done in $O(\abs{\TG})$ time and that $\abs{\mathcal V} \leq 2\abs{\TG}$.
		Now we are ready to define $D$ by
		\begin{align*}
				V(D) \defeq\ &\{ z \} \cup  \{ v_t \mid (v,t) \in \mathcal V\}\\
				E(D) \defeq\ &\{ (v_t,u_t),(u_t,v_t) \mid (v,t),(u,t) \in \mathcal V \text{ and } u \neq v \}\ \cup\\
					 &\left\{ (v_{t_2},v_{t_1}) \mvert v_{t_1},v_{t_2} \in V(D) \text{ and } 
				t_2 = \min \left\{ t \mvert (v,t) \in \mathcal V \text{ and } t > t_1 \right\} \right\}\ \cup\\
					 &\{ (z,z_t) \mid z_t \in V(D) \text{ and } t = \max \{ t' \mid (z,t') \in \mathcal V \} \}.
		\end{align*}
		Now all arcs in $\{ (v_t,u_t),(u_t,v_t) \mid (v,t),(u,t) \in \mathcal V \text{ and } u \neq v \}$ get weight one, 
		while all the other arcs get weight zero.
		Note that the $V(D)+E(D) \in O(\abs{\TG})$ and that $D$ can be constructed in~$O(\abs{\TG})$~time.
		Observe that for every temporal $v$-$z$~path $P$ in $\TG$ with departure time~$t$
		there is a $z$-$v_t$~path in $D$ whose accumulated edge-weight equals the length of~$P$.
		Hence, if we know the minimum edge-weight of the paths from~$z$ to all vertices in $D$,
		then we also know the value $d(v,t)$, 
		for all $v\in V$ and $t \in [\lifetime]$ where~$v$ is not isolated in the graph~$(V,E_t)$.
		Thus, we employ a breadth-first search that starts at~$z$ and only explores an arc of weight one
		of there is currently no arc of weight zero which could be explored instead.
		At each vertex $v_t \in V(D)$ we store the edge-weight $d(v,t)$ of the path from $z$ to this vertex.
		Hence, the overall running time of this procedure is $O(\abs{\TG})$~time.
\end{proof}

Finally, we are set to show \cref{thm:restless-abl}:
\mainthm{}

\begin{proof}[Proof of \cref{thm:restless-abl}]
		Let $I\defeq (\TG,s,z,\wait,k)$ be an instance of \probSRestlessPath{}.
		We perform the following.
		First, we compute the set $\mathcal V$ of non-isolated vertex appearances. 
		That is, $\mathcal V \defeq  \{ (v,t) \in V(\TG) \times \set{\lifetime} \mid \exists e \in E_t\colon v \in e\}$.
		Note that this can be done in $O(\abs{\TG})$ time and that $\abs{\mathcal V} \leq 2\abs{\TG}$.
		By \cref{lem:compute-distances}, we compute~$d(v,t)$ for all~$(v,t) \in \mathcal V$ in~$O(\abs{\TG})$~time.
		We may assume that there is a temporal $s$-$z$~path in $\TG$ and that a shortest of them has length at most~$k$, 
		otherwise $I$ is clearly a \no-instance.
		We set~$\ell \defeq  k - d(s,1) = k - d(s,t)$, 
		where $t=\min \{t' \in \set{\lifetime} \mid (s,t') \in \mathcal V \}$.
		Note that table~$T$, defined in \eqref{dp:alb1} and 
		\eqref{dp:alb2}, 
		has $O(\abs{\TG})$~entries---one 
		for each element in $\mathcal V$.
		To compute one entry in $T$, we consider at most $O(\abs{\TG})$ other entries in $T$ and 
		for each of them we have to check at most~$2\ell+1$ times 
		whether a temporal graph of size $O(\abs{\TG})$ 
		has a $\wait$-restless temporal path of length~$\ell' \in [2\ell+1]$ between two distinct vertices.
		We answer each of these checks by \cref{lem:exact-restless-path} 
		with a one-sided error probability of at most~$p'$
		in~$O(4^{\ell} \cdot \ell \abs{\TG}\wait\log(\ell\cdot \nicefrac{1}{p'}))$ time.
		How we set the error probability~$p'$ will be determinate in a moment.	

		We say that $I$ is a \yes-instance if and only if
		there is a $(z,t) \in \mathcal V$ such that~$T[z,t] \leq k$.
		If \cref{lem:exact-restless-path} reports \yes, then with probability one, 
		there is such a $\wait$-restless temporal path in question.
		Hence, by \cref{lem:restless-alb-forward}, if our overall algorithm reports \yes, then $I$ is a \yes-instance---the error probability is zero in this case.
		If our overall algorithm reports \no, 
		then the probability that $I$ is a \yes-instance shall be at most $1-p$.
		By \cref{lem:restless-abl-backwards}, it remains to determine~$p'$.
		Recall from \cref{a-claim} that a $\wait$-restless temporal $s$-$\z$~path~$P$ of length at most~$k$ visits at least every $2\ell+1$~vertices one distance separator.
		Hence, we can identify at most $\lceil\nicefrac{k}{\lceil\ell + \nicefrac{1}{2}\rceil}\rceil \in O(\nicefrac{k}{\ell})$ vertex appearances which are visited by $P$
		and thus $O(\nicefrac{k}{\ell})$ calls of the algorithm behind \cref{lem:exact-restless-path}
		such that if these calls are answered correctly then this causes our overall algorithm to report \yes, as $T[z,t] \leq k$ for some $t \in \set{\lifetime}$.
		Hence, there is a $p' \in O(\nicefrac{p\ell}{k})$ such that
		we have an error probability of at most~$p$ in the case our overall algorithm answers \no.
		Thus, we can compute all entries of~$T$ in~$\runningtimeInL$
				time, where~$\ell := k - d$ and $d$~is the minimum length of a temporal $s$-$\z$~path.
\end{proof}
On a more practical note,
one can observe that in order to compute one entry for vertex appearances~$(u,t)$ of table~$T$,
we only consider table entries of vertex appearances which are close to $(u,t)$ in terms of the distance~$d(\cdot,\cdot)$.
Thus, for temporal graphs that are nowhere dense in terms of $d(\cdot,\cdot)$, 
it seems reasonable that the presented dynamic programming technique does not induce a quadratic running time, 
in terms of the temporal graph size, 
on top of running time of \cref{lem:exact-restless-path}.
For example in contact networks where mass events are prohibited. 
Moreover, a $\wait$-restless temporal path of length $k$ has a time horizon of at most $(k-1)\wait$.
Hence, with an overhead of $O(\lifetime)$ one could guess the departure time $t$ of the $\wait$-restless $s$-$\z$~path and
discard all time-edge $(e,t')$ with $t' < t$ or $t' > t+(k-1)\wait+1$.
This potentially decreases the parameter $k-d$ and thus the exponential part of the running time substantial,
where $d$ is the minimum length of a temporal $s$-$\z$~path.

\section{Conclusion}

We showed that \probSRestlessPath{} admits fixed-parameter tractability for parameters below the solution size $k$.
In particular, we showed that \probSRestlessPath{} can be solved in $4^{k-d}\cdot |\TG|^{O(1)}$ time
with a one-sided error probability of at most~$2^{-\abs{\TG}}$, 
where~$d$ is the minimum length of a temporal $s$-$z$~path.
In the corresponding algorithm, we have only one subroutine with a super-polynomial running time:
an algorithm to find a $\wait$-restless temporal path of length at most $2(k-d)+1$.
Moreover, this is also the only subroutine that has a non-zero error probability. 

We believe that our algorithmic approach opens new research directions
to advance further:
\begin{itemize}
		\item First, we wonder how good our algorithm performs in an experimental setup comparable to the one of 
		 Thejaswi~\etal~\cite{thejaswi2020restless}.
 \item Second, one could study in detail the temporal subgraphs on which we employ \cref{lem:exact-restless-path}.
		 In principle, \cref{lem:exact-restless-path}, could be replaced with any other algorithm for \probSRestlessPath{}.
		 Do these specific temporal subgraphs admit structural properties which are algorithmically useful?
 \item Third, we believe that our geometric perspective presented in \cref{sec:general-idea} can be applied to other temporal graph problems.
		 In particular, for temporal graph problems which ask for specific temporal paths, e.g., 
		 temporal paths that obey certain robustness properties \cite{FMNR22}, or
		 temporal paths that visit all vertices at least once \cite{Erlebach0K15,ErlebachKLSS19,michail2016traveling} parameterized 
		 by the temporal diameter, that is, the length of the longest shortest temporal path between two arbitrary vertices.
\end{itemize}

\bibliography{strings-long,references}

\end{document}